\theoremstyle{plain}
\newtheorem{lemma}{Lemma}
\theoremstyle{definition}
\theoremstyle{remark}
\newcommand{\nm}[1]{}
\newcommand{\mb}[1]{\textcolor{blue}{\textbf{[MB: #1]}}}
\newcommand{\sst}[1]{}
\title{Power-Constrained Trajectory Optimization for Wireless UAV Relays with Random Requests
}
\author{Matthew Bliss and Nicol\`{o} Michelusi
\thanks{Bliss and Michelusi are with the School of Electrical and Computer Engineering, Purdue University, West Lafayette, IN, USA; emails: \{blissm,michelus\}@purdue.edu.}
}
\begin{document}
\maketitle
\thispagestyle{empty}
\pagestyle{empty}
\setulcolor{red}
\setul{red}{2pt}
\setstcolor{red}
\begin{abstract}
This paper studies the adaptive trajectory design of a rotary-wing UAV serving as a relay between ground nodes dispersed in a circular cell and a central base station.
Assuming the ground nodes generate uplink data transmissions randomly according to a Poisson process,
we seek to minimize the expected average communication delay to service the data transmission requests, subject to an average power constraint on the mobility of the UAV. The problem is cast as a semi-Markov decision process, and it is shown that the policy exhibits a two-scale structure, which can be efficiently optimized: in the outer decision, upon starting a communication phase, and given its current radius, the UAV selects a target end radius position so as to optimally balance a trade-off between \emph{average long-term} communication delay and power consumption; in the inner decision, the UAV selects its trajectory between the start radius and the selected end radius, so as to \emph{greedily} minimize the delay and energy consumption to serve the current request. Numerical evaluations show that, during waiting phases, the UAV circles at some optimal radius at the most energy efficient speed, until a new request is received. 
 Lastly, the expected average communication delay and power consumption of the optimal policy is compared to that of static and mobile heuristic schemes, demonstrating a reduction in latency by over 50\% and 20\%, respectively.
\end{abstract}

\begin{IEEEkeywords}
Rotary-wing UAVs, wireless communication networks, adaptive trajectory optimization, delay minimization
\end{IEEEkeywords}

\vspace{-1.1mm}
\section{Introdcution}\label{section_introduction}
Much recent research has gone into studying UAVs operating in wireless networks \cite{SurveyUAVCell,UAVTutorial,UAVSurveyImpIss,FundTrade}. The primary motivation for this interest is due to the unique benefits that UAVs acting as flying base stations, mobile relays, etc., provide in improving the overall network performance over terrestrial infrastructure in terms of mobility, maneuverability, and enhanced line-of-sight (LoS) link probability \cite{SurveyUAVCell,UAVTutorial,UAVSurveyImpIss,FundTrade,OnlineTrajOptUS}. 

Already, the literature has shown that consideration of UAV deployment strategies,
in terms of optimal positioning or trajectory design, can go a long way to increase network performance of many of the useful metrics. In \cite{DynBSRepos}, dynamic repositioning led to increase in \emph{spectral efficiency} over heuristics under both FDMA and TDMA schemes. The works of \cite{MSCD,CoopRelaying} utilized UAVs in communication and maximized the \emph{total service time}, with \cite{MSCD} outperforming static and random deployment methods, and \cite{CoopRelaying} meeting BER requirements.

Although showing potential to improve these performance metrics, the design of UAV deployment strategies is not without challenges \cite{SurveyUAVCell,UAVTutorial,UAVSurveyImpIss}. Optimal trajectory design must be formulated appropriately to incorporate realistic constraints imposed on the UAVs. Already, works such as \cite{CoopRelaying,EnEffUAVComm,EnMin} have gone at length to incorporate constraints on the limited onboard energy and mission times inherent to low-altitude platforms (LAPs) \cite{SurveyUAVCell}.

Despite the enormous interest in the design of UAV-assisted wireless communication networks, most of the prior work focuses on \emph{deterministic} models, in which the data traffic generated by ground nodes (GNs) is known beforehand, and there are no uncertainties in the network dynamics. However, this is impractical in realistic systems, where uncertainty dominates, and random request arrivals must be accounted for. Therefore, the design of UAV-assisted communication networks with random data traffic is still an \emph{open problem}.

To address this problem, our previous paper \cite{OnlineTrajOptUS} considered the optimization of the UAV trajectory and communication strategy under random traffic generated by two GNs. However, that model was limited to two GNs, neglected the power consumption of the UAV, and assumed that the UAV is the destination of the data traffic.
In this paper, we extend the model to densely deployed GNs that need to transmit data payloads to a backbone-connected base station; we investigate the optimal trajectory and communication strategy of the UAV, under an average power constraint on the UAV mobility.

To investigate this added complexity, we consider a scenario in which an UAV acts as a relay between multiple GNs dispersed uniformly in a circular cell and a central base station (BS), receiving transmission requests from the GNs according to a Poisson process.
 We formulate the problem as that of designing an \emph{adaptive} trajectory, with the goal to minimize the average long-term communication delay incurred to serve the requests of the GNs, subject to a constraint on the long-term average UAV power consumption to support its mobility. 

We show that the optimal trajectory in the communication phase operates according to a two-scale decision-making process, which can be efficiently optimized: in the \emph{outer decision}, the UAV, given its current radius, selects a target end radius position, which balances optimally the trade-off between \emph{average long-term} communication delay and power; in the \emph{inner decision}, given its current radius and the selected end radius, the UAV \emph{greedily} minimizes the delay and energy trade-off to serve the current request. Our numerical results reveal that during \emph{waiting phases}, the UAV tends to circle at some optimal radius at an energy-efficient speed determined by the \emph{outer decision} process, until receiving a new uplink transmission request. Additionally, we show that the optimal trajectory design vastly outperforms sensible heuristics in terms of delay minimization: it outperforms a static hovering scheme by roughly 50\% and a mobile heuristic scheme by up to 20\%,
while maintaining the same average power consumption on the UAV.

The rest of the paper is organized as follows. In Sec. \ref{section_sys}, we introduce the system model, state the optimization problem, and cast it as a semi-Markov decision process; in Sec. \ref{sec:Decomp}, we present the two-scale optimization approach; in Sec. \ref{sec:Results}, we provide numerical results; lastly, in Sec. \ref{sec:Conclusions}, we conclude the paper with some final remarks.
\vspace{-2.0mm}
\section{System Model and Problem Formulation}\label{section_sys}
Consider the scenario depicted in Fig. \ref{fig:SysModel}, where
multiple ground nodes (GNs) distributed uniformly over a circular cell of radius $a$ randomly generate data packets of $L$ bits, that need to be transmitted to a backbone-connected base station (BS), located in the center of the cell in position $\mathbf{q}_B=(0,0)$.\footnote{Unless otherwise stated, we use polar coordinates to express positions, so that $\mathbf q=(r,\theta)$ denotes the position at distance $r$ from the center, with angle $\theta$ with respect to the $x$-axis, as shown in Fig. \ref{fig:SysModel}.}
 The density of GNs within the circular radius is denoted as $\lambda_G$ [GNs/m$^2$].
 Each GN generates uplink transmission requests according to a Poisson process with rate $\lambda_P$ [requests/GN/sec].
Overall, uplink transmission requests of $L$ bits arrive in time according to a Poisson process with rate $\lambda=\lambda_G{\cdot}\lambda_P$ [requests/sec/m$^2$]. Requests are received uniformly within the circular cell, so that the probability density function (pdf) of a request received in position $(r,\psi)$ is expressed as
\begin{equation}\label{eq:LocationPDF}	
f_{R,\Psi}(r,\psi) = 
\frac{r}{\pi a^2},\ \forall r\leq a,\psi \in [0,2\pi).
\end{equation}

Direct communication between the GNs and the BS might not be possible due to severe pathloss. An UAV is thus deployed, flying at a fixed height $H_U$, to relay the traffic between the GNs and the BS. Let $\mathbf{q}_{U}(t) = (r_U (t), \psi_U (t)) \in \mathbb R_{+} \times[0,2\pi)$ be
 the projection of its position on the ground surface at time $t$. Due to constraints on UAV mobility, its speed is subject to a maximum constraint, expressed in polar coordinates as
\begin{align}
\label{vconstraint}
	v_U(t)\triangleq \sqrt{(r'_U (t))^{2} {+} (r_U (t) \cdot \psi'_U (t))^{2}} \leq V_{\mathrm{max}},
\end{align}
where $f'$ denotes the derivative of $f$ with respect to time.

\begin{figure}[t]
\centering
\includegraphics[width=.88\linewidth]{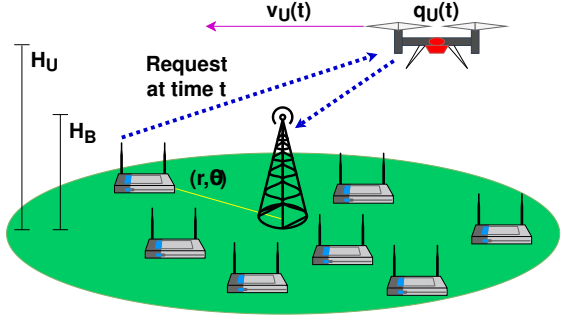}
\caption{System model depiction of an uplink transmission request at time $t$.}
\label{fig:SysModel}
\vspace{-5mm}
\end{figure}

We model the instantaneous UAV power consumption as
\begin{equation}
	P_{U}(t) = P_{\mathrm{c}}(t) + P_{\mathrm{mob}}\big( v_{U}(t) \big),
\end{equation}
where $P_{\mathrm{c}}(t)$ is the total power used onboard for communication processes and $P_{\mathrm{mob}}\big(v_{U}(t)\big)$ is the forward flight mobility power, a non-convex function of the UAV speed $v_U(t)$ \cite{EnMin,EnduranceEstimation}. We use the model in \cite{EnMin,EnduranceEstimation}, wherein
\begin{equation}\label{eq:PVModelCont}
	P_{\mathrm{mob}}(V){=}P_0 \Bigg(\!1{+}\frac{3V^2}{U_{\mathrm{tip}}^2}\!\Bigg){+}P_i \sqrt{ \sqrt{1{+}\frac{V^4}{4v_{0}^4}} - \frac{V^2}{2v_0^2} }{+} \beta V^3,
\end{equation}
where $P_0$ and $P_i$ are scaling constants, $U_{\mathrm{tip}}$ is the rotor blade tip speed, $v_0$ is the mean rotor induced velocity while hovering, $\beta \triangleq d_0 \rho s A/2$, $d_0$ is the fuselage drag ratio, $s$ is the rotor solidity, $\rho$ is the air density and $A$ is the rotor disc area (see \cite{EnMin}). An example using the physical parameters in  \cite{EnMin} is shown in Fig. \ref{fig:PVPlot}. Interestingly, the most power efficient operation is not achieved while hovering, but while flying at a fixed speed $V_{\min}{\simeq} 20$[m/s]; in addition, it is noted that the communication power $P_c(t)$ (order of 1W, as in \cite{EnMin}) is dwarfed by the amount of power used for UAV flight, $P_{\mathrm{mob}}(V)$ (order of x$100$W). Thus in this paper we will neglect $P_{\mathrm{c}}(t)$, and approximate $P_{U}(t) \approx P_{\mathrm{mob}} \big(v_{U}(t) \big)$.
\begin{figure}[t]
\centering
\includegraphics[width=.7\linewidth]{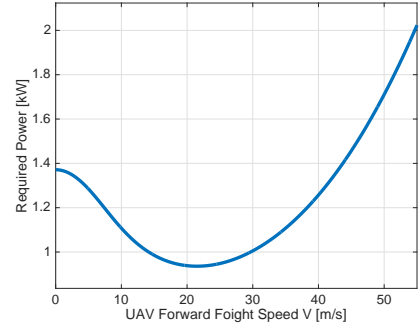}
\caption{UAV power vs. forward flight speed, simulated with the same physical parameters found in \cite{EnMin}.}
\label{fig:PVPlot}
\vspace{-6mm}
\end{figure}

We assume that communication intervals experience line of sight (LoS) links, and that the channel faces no probabilistic elements. In fact, UAVs in LAPs generally tend to have a highly likely occurrence of LoS links \cite{LOSDominance}. When an uplink transmission request is received at time $t$ from a GN in position $(r_G,\psi_G)$, the \emph{communication phase} begins, in which the UAV first receives the data payload from the GN, and then forwards it to the BS using a decode-and-forward, \emph{fly-hover-communicate} strategy \cite{EnMin}. Any additional requests received during this communication phase are dropped.\footnote{Alternatively, they might be served directly from the BS. This possibility will be investigated in our future work.} This phase is constituted of four distinct operations:
\begin{enumerate}[leftmargin=*]
\item The UAV flies from its current position $\mathbf q_U(t)=(r_U,\psi_U)$ to a new position
$\mathbf q_{GU}=(r_{GU},\psi_{GU})$, at constant speed $v_{1}$; the duration of this operation is
\begin{equation}
\Delta_{1}=v_{1}^{-1}\sqrt{r_{GU}^2 {+} r_U^2 {-} 2 r_{GU} \cdot r_U \cos{(\psi_{GU} {-} \psi_U)}},
\end{equation}
and its energy cost is $E_{1}=\Delta_{1}P_{\mathrm{mob}} \big(v_{1}\big);$
\item The UAV hovers in position $\mathbf q_{GU}$ while the GN transmits its data payload of $L$ bits to the UAV; assuming a fixed transmission power $P_{\mathrm{GN}}$,
the transmission rate is
\begin{equation}
	R_{GU}(d_{GU}) \triangleq B \log_{2} \Big( 1 + \frac{\gamma_{GU}}{d_{GU}^{2}}\Big),
\end{equation}
where $B$ is the channel bandwidth, $\gamma_{GU}$ is the SNR of the GN$\to$UAV link referenced at $1$-meter, and
\begin{equation}
	d_{GU}{=}\sqrt{H_U^2 {+} r_{GU}^2 {+} r_G^2 {-} 2 r_{GU} \cdot r_G \cos{(\psi_{GU} {-} \psi_G)}}
\end{equation}
is the UAV-GN distance; the associated duration and energy cost are
\begin{equation}
\Delta_2=\frac{L}{R_{GU}(d_{GU})},\ \ \ 
E_2=\Delta_2P_{\mathrm{mob}} \big(0\big);
\end{equation}
\item
The UAV flies from its current position $\mathbf q_{GU}$ to a new position
$\mathbf q_{UB}=(r_{UB},\psi_{UB})$, at constant speed $v_{3}$;
 the duration of this operation is
\begin{equation}
\nonumber
\Delta_3=v_3^{-1}\sqrt{r_{GU}^2 {+} r_{UB}^2 {-} 2 r_{GU} \cdot r_{UB} \cos{(\psi_{GU} {-} \psi_{UB})}},
\end{equation}
and its energy cost is $E_3=\Delta_3P_{\mathrm{mob}} \big(v_3\big);$
\item The UAV hovers in position $\mathbf q_{UB}$ while it relays the data payload to the BS; assuming a fixed transmission power $P_{\mathrm{UAV}}$ and reuse of the same frequency band,
the transmission rate is given by
\begin{equation}
	R_{UB}(d_{UB}) \triangleq B \log_{2} \Big( 1 + \frac{\gamma_{UB}}{d_{UB}^{2}}\Big),
\end{equation}
where $\gamma_{UB}$ is the SNR of the UAV$\to$BS link referenced at $1$-meter, $d_{UB}$ is  the UAV-BS distance,
\begin{equation}
	d_{UB} = \sqrt{(H_U-H_B)^2 {+} r_{UB}^2},
\end{equation}
and $H_B$ is the height of the BS antenna; the duration and energy cost of this operation are
$$
\Delta_4=\frac{L}{R_{UB}(d_{UB})},\ \ \ E_4=\Delta_4P_{\mathrm{mob}} \big(0\big).
$$
\end{enumerate}
The positions $\mathbf q_{GU},\mathbf q_{UB}$ and speeds 
$v_1,v_3$ are part of the design.
Overall, the delay and energy cost of the communication phase to serve a GN 
in position $(r_G,\psi_G)$ is given by
\begin{align}
&\!\!E^{(c)}{=}\Delta_1P_{\mathrm{mob}}\big(v_1\big){+}\Delta_3P_{\mathrm{mob}} \big(v_3\big)
{+}[\Delta_2{+}\Delta_4]P_{\mathrm{mob}} \big(0\big),\label{energy}\\
&\!\!\Delta^{(c)}=\Delta_1+\Delta_2+\Delta_3+\Delta_4.\label{delta}
\end{align}
Once the communication phase is completed at time $t+\Delta^{(c)}$, the UAV, now in position $\mathbf q_U(t+\Delta^{(c)}){=}\mathbf q_{UB}$, enters the \emph{waiting phase}, where it awaits for new  requests, and the process is repeated indefinitely. During this process of communication and waiting for new requests, the UAV follows a trajectory, part of our design, with the goal to minimize the average  communication delay, subject to an average UAV power constraint (which, given an onboard battery capacity, translates into an endurance constraint), as formulated next.

\subsection{Problem Formulation}
Let $\Delta_u^{(c)}$ and $E_u^{(c)}$ be the delay and energy cost incurred to complete the \emph{communication phase} of the $u$th request serviced by the UAV, as given by \eqref{energy} and \eqref{delta}. Let $\Delta_u^{(w)}$ and $E_u^{(w)}$ be the duration and energy cost of the waiting phase preceding it. Let $T_u{=}\Delta_u^{(w)}{+}\Delta_u^{(c)}$ and $E_u{=}E_u^{(w)}{+}E_u^{(c)}$ be the duration and energy cost of the $u$th waiting and communication cycle. Let $M_t$ be the total number of requests served and completed up to time $t$.
Then, we define the expected average communication delay and average UAV power consumption under a given trajectory policy $\mu$ (defined later), with the UAV starting from the geometric center $\mathbf{q}_{U}(0) = (0,0)$ as
\begin{align}\label{eq:OverallObj}
& \bar{D}_{\mu} \triangleq \lim_{t \rightarrow \infty} \mathbb{E}_{\mu} \left[\frac{\sum_{u = 0}^{M_t - 1} \Delta_u^{(c)}}{M_t}\Bigg|\mathbf{q}_{U}(0) = (0,0) \right],
\end{align}
\vspace{.1mm}
\begin{align}
& \bar{P}_{\mu} \triangleq \lim_{t \rightarrow \infty} \mathbb{E}_{\mu} \left[ \frac{\sum_{u = 0}^{M_t - 1}E_u}{\sum_{u = 0}^{M_t - 1}T_u} \Bigg| \mathbf{q}_{U}(0) = (0,0) \right] \label{eq:AvgPDef}.
\end{align}
We then seek to solve
\begin{align}\label{eq:OverallObj}
	\bar{D}_{\mu}^{*} = \underset{\mu}{\mathrm{min}} \; \bar{D}_{\mu} 
	\ \mathrm{s.t.} \; \bar{P}_{\mu}  \leq P_{\mathrm{avg}},
\end{align}
whose minimizer is denoted by the optimal policy $\mu^*$.
\\
\indent
Note that this problem is non-trivial. Consider, for instance, the power unconstrained delay minimization problem in which the UAV is the only destination of the packets. Here, the minimum delay to serve a request is achieved by flying towards the GN at maximum speed to improve the link quality. However, this strategy may not be optimal in an \emph{average} delay sense, due to the potentially longer distance that must be covered by the UAV to serve a subsequent request. Thus, it might be preferable for the UAV to operate closer to the geometric center of the cell, where new requests can more readily be served,
as observed in our earlier work in \cite{OnlineTrajOptUS}. Intriguingly, under an average power constraint, more interesting tradeoffs may emerge. For instance, maximizing the speed to improve the link quality may no longer be a viable option, due to high power consumption, and hovering might not be the most energy efficient operation (see Fig. \ref{fig:PVPlot}).
Letting
\begin{align*}
	&[\bar{E}_{\mu},\bar{T}_{\mu} ] \triangleq \lim_{t \rightarrow \infty} \mathbb{E}_{\mu} \Bigg[ \frac{\sum_{u = 0}^{M_t - 1}[E_u,T_u]}{M_t} \Bigg| \mathbf{q}_{U}(0) = (0,0) \Bigg],
\end{align*}
be the average energy and time of a waiting and communication cycle, we can use Little's Theorem (see \cite{LittlesTheorem}) to express the average power as $\bar{P}_{\mu}=\bar{E}_{\mu}/\bar{T}_{\mu}$, so that the power constraint can be equivalently expressed as $\bar{E}_{\mu}-P_{\mathrm{avg}}\bar{T}_{\mu} \leq 0$. Hence, \eqref{eq:OverallObj} can also be expressed equivalently as
\begin{align}\label{eq:OverallPolRef}
	\mu^* = \mathrm{arg}\underset{\mu}{\mathrm{min}} \; \bar{D}_{\mu}\ 
	\mathrm{s.t.} \; \bar{E}_{\mu} - P_{\mathrm{avg}} \bar{T}_{\mu} \leq 0.
\end{align}
As we will see, this is a more tractable form to work with, because it removes the metric $\bar{T}_{\mu}$ from the denominator, and allows one to express the optimization problem as a semi-Markov decision processes (SMDP).
\\
\indent To deal with the inequality constraint in \eqref{eq:OverallPolRef}, we let
\begin{align}\label{eq:DualFunction}
&	g(\nu) = \underset{\mu}{\mathrm{min}} \; \bar{D}_{\mu} + \nu (\bar{E}_{\mu} - P_{\mathrm{avg}}\bar{T}_{\mu})
\\&	=\underset{\mu}{\mathrm{min}}
	\lim_{t \rightarrow \infty} \mathbb{E}_{\mu} \Bigg[\frac{\sum_{u = 0}^{M_t - 1} (\Delta_u^{(c)}{+}\nu E_u{-}\nu P_{\mathrm{avg}}T_u)}{M_t}\Bigg|\mathbf{q}_{U}(0) \Bigg],
	\nonumber
\end{align}
be the dual function with dual variable $\nu$, which can be optimized by solving the related dual maximization problem
\begin{equation}\label{eq:DualProblem}
	\underset{\nu \geq 0}{\mathrm{max}} \; g(\nu).
\end{equation}

\subsection{SMDP Formulation}
In this section, we formulate \eqref{eq:DualFunction} as a SMDP, and characterize its states, actions, cost metrics, and policy. We then optimize this SMDP via discretization and dynamic programming. In general, the state at any given time $t$ requires knowledge of the UAV position $\mathbf q_U(t){=}(r_U(t),\psi_U(t))$, whether there is a request for uplink transmission, and if a request exists, the location of the GN that originated it, $\mathbf q_G(t){=}(r_G(t),\psi_G(t))$. However, during waiting phases, the angular coordinate $\psi_U(t)$ of the UAV is irrelevant to the decision process; only its radius $r_U(t)$ is. In fact, the angular coordinate of requests is uniform in $[0,2\pi)$, irrespective of $\psi_U(t)$. During communication phases, only the position of the GN relative to that of the UAV matters, i.e., their radii $r_U(t)$, $r_G(t)$ and relative angular coordinate
$\theta_G(t)\triangleq\psi_G(t){-}\psi_U(t){\in}[0,2\pi)$,
 which is uniformly distributed in $[0,2\pi)$. Hence, the state can be more compactly expressed as $r_U(t)$ (for the UAV position) and 
 $(r_G(t),\theta_G(t)\triangleq\psi_G(t)-\psi_U(t))$ (for a request). Let
\begin{equation}
	\mathcal{R}_{\mathrm{UAV}} \triangleq \mathbb R_+,\ \mathcal{Q}_{\mathrm{GN}} \triangleq [0,a] \times [0,2\pi)
\end{equation}
be the set of all radii positions of the UAV, $r_U\in\mathcal{R}_{\mathrm{UAV}}$, and of all polar coordinates of requests by the GNs, relative to the angular coordinate of the UAV, $(r_G,\theta_G)\in\mathcal{Q}_{\mathrm{GN}}$. With that,
we define the set of \emph{waiting} and \emph{communication states} as
$$\mathcal{S}_{\mathrm{wait}} = \mathcal{R}_{\mathrm{UAV}} \times \{(-1,-1)\},\ 
\mathcal{S}_{\mathrm{comm}} = \mathcal{R}_{\mathrm{UAV}} \times \mathcal{Q}_{\mathrm{GN}},$$
where $({-}1,{-}1)$ denotes no active request,
so that the overall state space is $\mathcal{S}{=}\mathcal{S}_{\mathrm{wait}}{\cup}\mathcal{S}_{\mathrm{comm}}$.
To define this SMDP, we sample the continuous time interval to define a sequence of states $\{s_n, n{\geq}0 \} {\subseteq}\mathcal{S}$
 with the Markov property,
 along with associated time, delay and energy costs,
  as specified below.

If the UAV is in state $s_n = (r_U,-1,-1) \in \mathcal{S}_{\mathrm{wait}}$ at time $t$, i.e., it is in the radial position $r_U$ and there are no active requests, then the actions available are to move with velocity vector $(v_r,\theta_c)$, over an arbitrarily small but fixed interval of duration $\Delta_{0} \ll \pi a^2 \lambda$. The terms $v_r$ and $\theta_c$ denote the radial and angular velocity components, respectively;
since they must obey the velocity constraint \eqref{vconstraint},
they take values from the action space
\begin{equation}\label{eq:WaitActions}
\mathcal{A}_{\mathrm{wait}}(r_U) {\triangleq}\Big\{(v_{r},\theta_{c}) \in \mathbb R^{2} \, \Big| \, \sqrt{v_{r}^{2} + r_U^2\cdot \theta_{c}^{2}}\leq V_{\mathrm{max}} \Big\}.
\end{equation}
After this interval, the new radial position becomes $r_U(t+\Delta_0) = r_U+v_r \Delta_0$.
Moreover, with probability 
$e^{-\pi a^2 \lambda \Delta_{0}}$, no new request has been received in the time interval $[t,t+\Delta_0]$ so that the new state $s_{n+1}$, sampled at time $t+\Delta_0$, becomes
$s_{n+1}=(r_U+v_r\Delta_0,-1,-1)\in\mathcal{S}_{\mathrm{wait}}$.
Otherwise, a new request is received in position $(r_G,\theta_G)$ with the pdf given by \eqref{eq:LocationPDF},
so that the new state is $s_{n+1}{=}(r_U+v_r\Delta_0,r_G,\theta_G)\in\mathcal{S}_{\mathrm{comm}}$.

Overall, the transition probability from the waiting state $s_n = (r_{U},-1,-1)$ under action $\mathbf{a}_n{=}(v_r,\theta_c)$ is expressed as
\begin{align*}
&\mathbb{P}(s_{n+1}{=}(r_U{+}v_r\Delta_0,{-}1,{-}1)|s_n,\mathbf{a}_n) = e^{-\pi a^2 \lambda \Delta_{0}}, 
\\
&\mathbb{P}(s_{n+1} {\in} (r_U + v_r \Delta_0,\mathcal{F}) \,|s_n,\mathbf{a}_n) 
= \frac{A(\mathcal{F}) {\cdot} (1{-}e^{-\pi a^2 \lambda \Delta_{0}})}{\pi a^2},
\end{align*}
$\forall\mathcal{F}{\subseteq}\mathcal{Q}_{\mathrm{GN}}$, where $A(\mathcal{F})$ is the area of the region $\mathcal{F}$ on the $x$-$y$ plane. To complete the definition of the SMDP, we need to define the cost metrics under each state and action.
The duration of  action $\mathbf a_n{=}(v_r,\theta_c)$ in state $s_n{=}(r_U,-1,-1)$ is $T(s_n,\mathbf a_n){\triangleq} \Delta_0$, its delay cost is $\Delta(s_n,\mathbf a_n){\triangleq}0$ (the UAV is not communicating), and its energy cost is $E(s_n,\mathbf a_n){\triangleq}\Delta_0 P_{\mathrm{mob}}\Big(\sqrt{v_r^2{+}r_U^2{\cdot}\theta_c^2} \Big)$.

Upon reaching state $s_n{=}(r_U,r_G,\theta_G){\in}\mathcal{S}_{\mathrm{comm}}$ at time $t$, the UAV has received a request to serve the transmission of $L$ bits from a GN located at $(r_G,\theta_G)$ (relative to its current angle coordinate). The actions available to the UAV at this point are all trajectories starting from its current position that follow the $4$-step communication phase procedure described earlier. Thus, we denote an action as $\mathbf a_n = (\mathbf q_{GU},v_1,\mathbf q_{UB}, v_3)$, whose duration and communication delay is $T(s_n,\mathbf a_n)=\Delta(s_n,\mathbf a_n)\triangleq\Delta^{(c)}$ as given by \eqref{delta}, and whose energy cost is $E(s_n,\mathbf a_n)\triangleq E^{(c)}$ as given by \eqref{energy}. After the communication phase is completed at time $t+\Delta^{(c)}$, the new state $s_{n+1}$ is sampled. At this point, a new waiting phase begins and the radial position of the UAV is $r_{UB}$ (the radial position corresponding to $\mathbf q_{UB}$), so that the transition probability from state $s_n$ under action $\mathbf a_n$ is expressed as
\begin{align}
	\mathbb{P}(s_{n+1} = (r_{UB},-1,-1)\,|\,s_n=(r_U,r_G,\theta_G),\mathbf a_n) = 1. \label{eq:CommTrx}
\end{align}

With the states and actions defined, we can define a policy $\mu$. Specifically, for states $(r_U,-1,-1) \in \mathcal{S}_{\mathrm{wait}}$, $\mu$ selects a velocity vector $(v_r,\theta_c)\in\mathcal{A}_{\mathrm{wait}}(r_U)$, as defined in \eqref{eq:WaitActions}. Likewise, for states $(r_U,r_G,\theta_G) \in \mathcal{S}_{\mathrm{comm}}$, the policy selects an action $(\mathbf q_{GU},v_1,\mathbf q_{UB}, v_3)$ as has been prescribed.

Having now defined a \emph{stationary policy} $\mu$, we can reformulate the Lagrangian term $L_\mu^{(\nu)}\triangleq\bar{D}_{\mu} + \nu (\bar{E}_{\mu} - P_{\mathrm{avg}}\bar{T}_{\mu})$, which we seek to minimize to find the dual function in \eqref{eq:DualFunction}. In the context of the SMDP,
this can be expressed as
\begin{equation}\label{eq:OverallObjH}
L_\mu^{(\nu)}= \lim_{K \rightarrow \infty} \mathbb{E} \Bigg[ \frac{\frac{1}{K}\sum_{n=0}^{K-1}  
\ell_\nu(s_n,\mu(s_n))
 }{\frac{1}{K}\sum_{n = 0}^{K-1} \chi(s_n \in \mathcal{S}_{\mathrm{comm}})} \Bigg| s_0 \Bigg],
\end{equation}
where $s_0 = (0,-1,-1)$ (the UAV begins in the center with no transmission requests), $\chi(C)$ is the indicator function of the event $C$, and we have defined the overall Lagrangian metric in state $s$ under action $\mathbf a$ as
\begin{equation}\label{eq:EllLagDef}
\ell_\nu(s,\mathbf a)\triangleq\Delta(s,\mathbf a) + \nu \big(E(s,\mathbf a) - P_{\mathrm{avg}} T(s,\mathbf a) \big).
\end{equation}
Using Little's Theorem \cite{LittlesTheorem}, we can rewrite 
$L_\mu^{(\nu)}$ in terms of the steady-state pdf of being in state $s$ in the SMDP, $\Pi_{\mu}(s)$,
\begin{equation}\label{eq:CostMetric}
L_\mu^{(\nu)}= 
 \frac{1}{\pi_{\mathrm{comm}}}\int_{\mathcal{S}} \Pi_{\mu}(s)\ell_\nu(s,\mu(s))\mathrm ds,
\end{equation}
where $\pi_{\mathrm{comm}}\triangleq\int_{\mathcal{S}_{\mathrm{comm}}} \Pi_{\mu}(s) \mathrm ds$ 
is the steady-state probability of being in a communication state in the SMDP,
which is provided in closed-form in the next lemma.

\begin{lemma}\label{eq:SSComm}
	Let $\pi_{\mathrm{wait}}$ and $\pi_{\mathrm{comm}}$ be the steady-state probabilities that the UAV is in the waiting and communication phases in the SMDP. We have that
\begin{equation}\label{eq:SSCommResult}
	\pi_{\mathrm{wait}} = \frac{1}{2-e^{-\pi a^2 \lambda \Delta_0}},\  \; \pi_{\mathrm{comm}} = \frac{1-e^{-\pi a^2 \lambda \Delta_0}}{2-e^{-\pi a^2 \lambda \Delta_0}}.
\end{equation}
\end{lemma}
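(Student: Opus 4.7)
The plan is to reduce the computation of $\pi_{\mathrm{wait}}$ and $\pi_{\mathrm{comm}}$ to a balance equation on a two-state ``macro-chain'' obtained by aggregating all of $\mathcal{S}_{\mathrm{wait}}$ into one state and all of $\mathcal{S}_{\mathrm{comm}}$ into another. The key observation is that the SMDP transition rules already given in the text determine, unambiguously and independently of the chosen policy $\mu$, the probability of transitioning between these two aggregated phases. Specifically, from any state in $\mathcal{S}_{\mathrm{wait}}$, the probability of returning to $\mathcal{S}_{\mathrm{wait}}$ in one step equals $e^{-\pi a^2\lambda \Delta_0}$ (no Poisson arrival during the interval $\Delta_0$), and the probability of jumping to $\mathcal{S}_{\mathrm{comm}}$ equals $1-e^{-\pi a^2\lambda \Delta_0}$; from any state in $\mathcal{S}_{\mathrm{comm}}$, equation \eqref{eq:CommTrx} guarantees that the next state is in $\mathcal{S}_{\mathrm{wait}}$ with probability one.

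First, I would write down the two balance equations for the aggregated chain:
\begin{align*}
\pi_{\mathrm{wait}} &= \pi_{\mathrm{wait}} \cdot e^{-\pi a^2 \lambda \Delta_0} + \pi_{\mathrm{comm}} \cdot 1,\\
\pi_{\mathrm{comm}} &= \pi_{\mathrm{wait}} \cdot (1-e^{-\pi a^2 \lambda \Delta_0}),
\end{align*}
together with normalization $\pi_{\mathrm{wait}}+\pi_{\mathrm{comm}}=1$. These two equations are consistent (each is a rearrangement of the other once normalization is used). Combining the second with normalization gives $\pi_{\mathrm{wait}}(1+(1-e^{-\pi a^2\lambda\Delta_0}))=1$, which directly yields the claimed closed forms in \eqref{eq:SSCommResult}.

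The only substantive concern is justifying that this macro-chain has a well-defined stationary distribution, so that $\pi_{\mathrm{wait}}$ and $\pi_{\mathrm{comm}}$ (defined as integrals of the steady-state pdf $\Pi_\mu$ over the two disjoint regions of $\mathcal{S}$) actually satisfy these balance equations. For any stationary policy $\mu$, irreducibility and positive recurrence of the macro-chain follow from the fact that with probability $1-e^{-\pi a^2 \lambda \Delta_0}>0$ a waiting state transitions to a communication state in one step, and a communication state returns to a waiting state in one step; hence the return time to $\mathcal{S}_{\mathrm{wait}}$ is geometrically bounded and has finite mean. I would note this in one line as the justification that the aggregated chain admits a unique stationary distribution, and that the macro-level balance equations above are obtained by marginalizing the full balance equations for $\Pi_\mu$ over $\mathcal{S}_{\mathrm{wait}}$ and $\mathcal{S}_{\mathrm{comm}}$.

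The only ``hard part'' is therefore conceptual rather than computational: explaining why the macro transition probabilities between $\mathcal{S}_{\mathrm{wait}}$ and $\mathcal{S}_{\mathrm{comm}}$ are policy-independent, which is what makes the closed form universal in $\mu$. Everything else is arithmetic on a $2\times 2$ system.
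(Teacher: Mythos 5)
Your proposal is correct and follows essentially the same route as the paper: aggregate the SMDP into a two-state wait/comm chain with transition probabilities $p_{ww}=e^{-\pi a^2\lambda\Delta_0}$, $p_{wc}=1-p_{ww}$, $p_{cw}=1$, $p_{cc}=0$, then solve the balance equations with normalization. Your added remarks on policy-independence of the macro transitions and positive recurrence are a harmless (and reasonable) elaboration of what the paper leaves implicit.
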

\begin{proof}
Let $p_{ww}$, $p_{wc}$, $p_{cw}$, and $p_{cc}$ 
be the probabilities of remaining in the waiting phase ($ww$), moving from a waiting state to a communication state ($wc$),
from a communication to a waiting state ($cw$), or remaining in the communication phase ($cc$), in one state transition of the SMDP. Then, $p_{ww} = e^{-\pi a^2 \lambda \Delta_0}$ (if no request is received, the SMDP remains in the waiting state), $p_{wc}=1-p_{ww}$, $p_{cw} = 1$, and $p_{cc} = 0$ (after the communication phase, the waiting phase begins, see \eqref{eq:CommTrx}). Therefore, $\pi_{\mathrm{wait}}$ and $\pi_{\mathrm{comm}}$ satisfy $\pi_{\mathrm{wait}} + \pi_{\mathrm{comm}} = 1$ and
\begin{align}
&		\pi_{\mathrm{wait}} = p_{ww}\pi_{\mathrm{wait}} + p_{cw}\pi_{\mathrm{comm}} \nonumber
		=e^{-\pi a^2 \lambda \Delta_0}\pi_{\mathrm{wait}} + \pi_{\mathrm{comm}},
		\\&
		\pi_{\mathrm{comm}} = p_{wc}\pi_{\mathrm{wait}} + p_{cc}\pi_{\mathrm{comm}}
		=(1-e^{-\pi a^2 \lambda \Delta_0})\pi_{\mathrm{wait}}, \nonumber
\end{align}
whose solution is given in the statement of the lemma.
\end{proof}
The minimization problem of \eqref{eq:DualFunction} can then be expressed as
\begin{equation}\label{eq:TotalGMin}
	g(\nu)=\underset{\mu}{\mathrm{min}}\ L_\mu^{(\nu)} = \frac{1}{\pi_{\mathrm{comm}}}\underset{\mu}{\mathrm{min}} \; \int_{\mathcal{S}} \Pi_{\mu}(s) 
	\ell_\nu(s,\mu(s))\mathrm d s\!
\end{equation}
with the subsequent dual maximization problem given in \eqref{eq:DualProblem}.
\vspace{-5mm}
\section{Two-scale SMDP Optimization} \label{sec:Decomp}
To reduce the complexity of the problem, we exploit a decomposition of the policy $\mu$, such that the total optimization problem of \eqref{eq:TotalGMin} and its dual maximization consist of solving simpler, inner and outer optimization problems separately in a two-scale decision-making approach.

Note that the steady-state pdf $\Pi_{\mu}(s)$ depends on the policy $\mu$ only through $v_r$ for \emph{waiting state} actions $(v_r,\theta_c)$ and through the radius $r_{UB}$ of $\mathbf q_{UB} = (r_{UB},\theta_{UB})$ for \emph{communication state} actions $(\mathbf q_{GU}, v_1, \mathbf q_{UB}, v_3)$. By separating $v_r$ from $\theta_c$ in the \emph{waiting states} and $r_{UB}$ from the other action elements in the \emph{communication states}, we have created a decomposition which allows parts of the optimal cost of a state-action pair, $\ell_{\nu}^* (s, \mu(s))$, to be solved separately from the steady-state probabilities $\Pi_{\mu}(s)$. Next, we formalize this decomposition.

Let $W(s) \triangleq v_r \in [-V_{\mathrm{max}},V_{\mathrm{max}}]$ define the \emph{radial velocity policy} of the \emph{waiting states} that specifies the radial velocity component of a waiting action $\mathbf{a}{=}(v_r,\theta_c){\in} \mathcal{A}_{\mathrm{wait}}(r_U)$. Additionally, let $U(s){\triangleq}r_{UB} {\in} \mathcal{R}_{\mathrm{UAV}}$ define the \emph{next radius position policy} of the \emph{communication states} that specifies the end radius position of the communication action. Under this decomposition, we can express the pdf $\Pi_{\mu}$ solely as a function of the policies $W,U$, i.e. $\Pi_{W,U}$, so that we have that
\begin{equation}\label{eq:PolDecomp}
	g(\nu) = \frac{1}{\pi_{\mathrm{comm}}}\,\underset{W,U}{\mathrm{min}} \int_{\mathcal{S}} \Pi_{W,U}(s) \ell_{\nu}^{*}(s,Z(s))\mathrm{d}s ,
\end{equation}
where $Z(s) = W(s)$ for $s \in \mathcal{S}_{\mathrm{wait}}$, $Z(s) = U(s)$ for $s \in \mathcal{S}_{\mathrm{comm}}$, and $ \ell_{\nu}^{*}(s,Z(s))$ is obtained by greedily minimizing 
$ \ell_{\nu}^{*}(s,\mathbf a)$ with respect to the components of the actions not specified by $W,U$ (hence not affecting $\Pi_{W,U}$). Namely, for waiting states $s = (r_U,-1,-1)$ and $W(s) = v_r$,
\begin{align}\label{eq:MinLWait}
&\ell_{\nu}^{*}(s,v_r) {=} \underset{\theta_c}{\mathrm{min}}\ \nu\left[ P_{\mathrm{mob}}\!\!\left(\sqrt{v_{r}^2 {+} r_U^2 {\cdot} \theta_c^2}\right) {-} P_{\mathrm{avg}} \right]\Delta_0, \nonumber \\
&\mathrm{s.t.}\;\; \sqrt{v_{r}^{2} {+} r_U^2 {\cdot} \theta_c^2} \leq V_{\mathrm{max}}.
\end{align}
Note that the minimizer $\theta_c^{*}$ of \eqref{eq:MinLWait} is the angular velocity that minimizes the UAV power consumption for some given radial velocity $v_r$ and UAV radius $r_U$, solvable \emph{offline} through exhaustive search (for the case in which the power curve follows the unimodal function given in Fig. \ref{fig:PVPlot}, it can be determined in closed form as
$\theta_c=0$ if $v_r\geq v_{\min}\triangleq\arg\min_{V>0}P_{\mathrm{mob}}(V)$
and $\theta_c=\sqrt{v_{\min}^2-v_{r}^{2}}/r_U$ if $v_r< v_{\min}$). For 
communication states
$s = (r_U, r_G, \theta_G)$ and $U(s) = r_{UB}$,
\begin{align}
&\ell_{\nu}^{*}(s,r_{UB}) = \underset{\underset{ v_1,v_3\leq V_{\mathrm{max}}}{\mathbf q_{GU},\theta_{UB}}}{\mathrm{min}}\; (1-\nu P_{\mathrm{avg}})\Delta^{(c)}+ \nu E^{(c)}
\end{align}
where $\Delta^{(c)}$ and $E^{(c)}$ are the delay and energy costs given by \eqref{energy} and \eqref{delta}. Due to low dimensionality, the solutions to $\ell_{\nu}^{*}(s,U(s))$ can also be found through an exhaustive search. 
Once $\ell_{\nu}^{*}(s,Z(s))$ has been determined for all states $s$, the problem of \eqref{eq:PolDecomp} can then be solved for a given value of $\nu$ through discretization and dynamic programming (i.e., value iteration or policy iteration), where the subsequent dual maximization resorts to adjusting $\nu$ iteratively, either by\sst{ a simple}\nm{dont use "simple"!} exhaustive search or subgradient-based methods.

\vspace{-1mm}
\section{Numerical Results} \label{sec:Results}
For the simulation parameters, we use a channel bandwidth $B {=} 1$MHz, 1-meter reference SNRs $\gamma_{GU} {=}\gamma_{UB} {=} 40$dB, UAV height $H_U {=} 120$m, BS height $H_B {=} 60$m, maximum UAV speed $V_{\mathrm{max}} {=} 55$m/s, and a Poisson arrival rate of $\lambda {=} 2.693 {\times} 10^{-9}$[requests/sec/m$^2$]. For the power consumption model, we use the power-speed relationship given in \eqref{eq:PVModelCont} (see Fig. \ref{fig:PVPlot}) and the same parameters utilized in \cite{EnMin}.

To solve an approximation of the problem, we discretize the state and action spaces, solving an inner SMDP for a given $\nu^{(k)}$ via value iteration, updating $\nu^{(k+1)}$, and repeating until the maximum is found.
In discretizing the state space, we select a cell radius of $a {=} 1600$m and $N {=} 10$ equispaced discretized radii, with a single GN located in the center, $M {=} 3$ equispaced angular positions at the next discretized radius position, $2M$ in the next one, $3M$ in the next one, and so on until we reach the $N$th radius value, ensuring that the distribution of GNs in the circular area approximates the pdf of the GNs described in the system model. We discretize the \emph{radial velocity} actions into $K {=} 13$ equispaced values such that $v_r {\in} \{-V_{\mathrm{max}},{\dots},0,{\dots},V_{\mathrm{max}}\}$. For the \emph{next radius position} actions, we utilize the radii positions indexed by the set $\{1,{\dots},N\}$. Lastly, the action duration $\Delta_0$ from waiting states is chosen in such a way that it is unlikely to receive more than one request in an interval $[0,\Delta_0]$. We choose $e^{-\pi a^2 \lambda \Delta_0} {\simeq} 0.93$.

In Fig. \ref{fig:WaitPol}, we depict the \emph{waiting phase} policy, where the target average power constraint is fixed at $P_{\mathrm{avg}} {=} 1000$ Watts, and the data payload value is varied for comparison. Note that for small data
payload values, more consideration is placed on power minimization, hence the UAV seeks to move towards a positive radius value and move circularly at the power-minimizing speed until receiving a request; for larger data payloads, the minimization focuses more on communication delay, hence, the UAV seeks to reach the geometric center of the GNs quickly in order to position itself best in anticipation of future transmission requests.

Next, we look at an example of the \emph{communication phase} policy for a fixed data payload value of $L = 1$Mbit, UAV position $(r_U,\theta_U) {=} (710,0)$m, request radius $r_G {=} 1600$m, and varied relative request angles, $\theta_G$, as illustrated in Fig. \ref{fig:CommPol}. We observe the pattern of the relative distance between a requesting GN and the associated position $\mathbf q_{GU}$, where the UAV first receives $L$ bits from the GN. Additionally, a common end radius of $r_{UB} {=} 178$m is found to be an optimal end point, even under variations in the relative request angles $\theta_G$.
\begin{figure}[t]
\centering
\includegraphics[width=.8\linewidth]{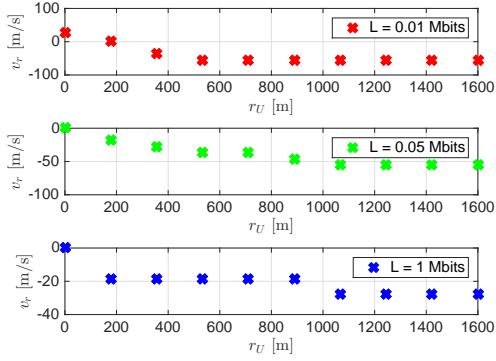}
\caption{Optimal radial velocities $v_r$ for each of the waiting state UAV radii, shown across $3$ different data payloads}
\vspace{-5mm}
\label{fig:WaitPol}
\end{figure}
\begin{figure}[t]
\centering
\includegraphics[width=.8\linewidth]{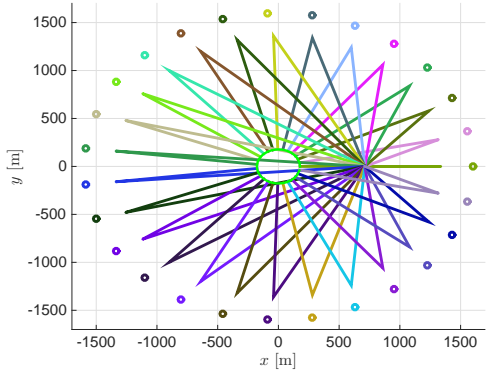}
\caption{Communication phase policy ($U^{*}(s) = r_{UB} = 178$ m and $v_1^*,v_3^* = 29.1$ m/s in all cases) with data payload $L = 1$ Mbits for UAV position $(r_U,\theta_U) = (710,0)$ m, $r_G = 1600$ m, and varied $\theta_G$.}
\label{fig:CommPol}
\vspace{-5mm}
\end{figure}

Finally, in Fig. \ref{fig:PwrVsDelayl}, we fix the data payload value at $L {=} 1$Mbit and show how the optimal expected average delay, $\bar{D}_{\mu}^{*}$, changes for various target $P_{\mathrm{avg}}$ values in the range $[875,1850]$Watts. As expected,  $\bar{D}_{\mu}^{*}$ decreases with increasing UAV power consumption. Additionally, the performance is compared to several other heuristics:
\begin{enumerate}[leftmargin=*]
\item \emph{Hover at center}: The UAV always hovers at the center of the cell. The expected
 average delay is $90.59$[s], noticeably worse than the delay yielded by the optimal policy $\mu^*$ for any of the tested $P_{\mathrm{avg}}$ targets. Alternatively, if the GNs always transmit directly to the BS, we found that the performance is roughly the same, since the UAV-BS link incurs small delay thanks to the small UAV-BS distance.\footnote{For this case, we use optimistically the same pathloss model as for the GN-UAV-BS links.}
\item \emph{Start-end at center}: The UAV hovers at the center, awaiting requests; once a request is received, it moves at speed $v$ towards the GN at a certain radius $r_{GU}$, receives the $L$ bits from the GN, travels back at speed $v$ to the center, where it hovers to relay the data payload to the BS; $r_{GU}$ is optimized to minimize the communication delay of each request, whereas $v$ is varied so as to obtain different power consumptions. Note that the expected average communication delay is worse than the optimal policy $\mu^*$ across all values of $P_{\mathrm{avg}}$,
except for smaller values of $P_{\mathrm{avg}}$ (due to the approximation introduced by discretization).
\end{enumerate}
Overall, the optimal policy outperforms these heuristic schemes by a significant margin, hence demonstrating the importance of an adaptive design that optimizes the average long-term performance.\sst{Thus in both the hover at center and start-end at center heuristic schemes, the UAV by and large outperforms
 in terms of expected average delay,}\nm{what did you mean? From what I read, you are saying that the heuristic schemes outperform the optimal scheme. Please restate. I reformulated it}\sst{ and it is sensible to implement the UAV trajectory over a wide range of target power constraints.}\nm{what do you mean?}
\begin{figure}[t]
\centering
\includegraphics[width=.8\linewidth]{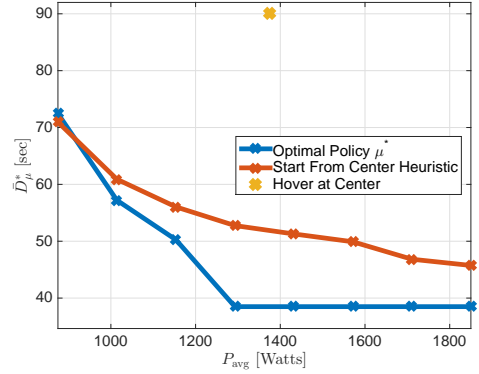}
\caption{Expected average communication delay vs. average power constraint for the optimal policy $\mu^*$ and two heuristic schemes, for data payload value $L = 1$ Mbit.}
\label{fig:PwrVsDelayl}
\vspace{-5mm}
\end{figure}
\section{Conclusions} \label{sec:Conclusions}
In this paper, we studied the trajectory optimization problem of one UAV acting as a relay in a two-phase, decode-and-forward strategy, servicing random uplink transmission requests by GNs seeking to communicate a data payload to a centralized BS. We formulated a continuous state and action space, discretized the problem into an SMDP, and solved it through dynamic programming. It was shown that the problem exhibits an interesting two-scale structure. Numerical evaluations demonstrate consistent improvements in the delay performance over sensible heuristics.\sst{, for a range of target average power constraints.}

\bibliographystyle{IEEEtran}
\bibliography{IEEEabrv,ref} 

\begin{thebibliography}{10}
\providecommand{\url}[1]{#1}
\csname url@samestyle\endcsname
\providecommand{\newblock}{\relax}
\providecommand{\bibinfo}[2]{#2}
\providecommand{\BIBentrySTDinterwordspacing}{\spaceskip=0pt\relax}
\providecommand{\BIBentryALTinterwordstretchfactor}{4}
\providecommand{\BIBentryALTinterwordspacing}{\spaceskip=\fontdimen2\font plus
\BIBentryALTinterwordstretchfactor\fontdimen3\font minus
  \fontdimen4\font\relax}
\providecommand{\BIBforeignlanguage}[2]{{%
\expandafter\ifx\csname l@#1\endcsname\relax
\typeout{** WARNING: IEEEtran.bst: No hyphenation pattern has been}%
\typeout{** loaded for the language `#1'. Using the pattern for}%
\typeout{** the default language instead.}%
\else
\language=\csname l@#1\endcsname
\fi
#2}}
\providecommand{\BIBdecl}{\relax}
\BIBdecl

\bibitem{SurveyUAVCell}
A.~{Fotouhi}, H.~{Qiang}, M.~{Ding}, M.~{Hassan}, L.~G. {Giordano},
  A.~{Garcia-Rodriguez}, and J.~{Yuan}, ``Survey on uav cellular
  communications: Practical aspects, standardization advancements, regulation,
  and security challenges,'' \emph{IEEE Communications Surveys Tutorials}, pp.
  1--1, 2019.

\bibitem{UAVTutorial}
M.~{Mozaffari}, W.~{Saad}, M.~{Bennis}, Y.~{Nam}, and M.~{Debbah}, ``A tutorial
  on uavs for wireless networks: Applications, challenges, and open problems,''
  \emph{IEEE Communications Surveys Tutorials}, pp. 1--1, 2019.

\bibitem{UAVSurveyImpIss}
L.~{Gupta}, R.~{Jain}, and G.~{Vaszkun}, ``Survey of important issues in uav
  communication networks,'' \emph{IEEE Communications Surveys Tutorials},
  vol.~18, no.~2, pp. 1123--1152, Secondquarter 2016.

\bibitem{FundTrade}
Q.~Wu, L.~Liu, and R.~Zhang, ``{Fundamental Trade-offs in Communication and
  Trajectory Design for UAV-Enabled Wireless Network},'' \emph{IEEE Wireless
  Communications}, vol.~26, pp. 36--44, 02 2019.

\bibitem{OnlineTrajOptUS}
M.~Bliss and N.~Michelusi, ``Trajectory optimization for rotary-wing uavs in
  wireless networks with random requests.'' 2019. To appear at GLOBECOM 2019.
  https://arxiv.org/abs/1905.01755.

\bibitem{DynBSRepos}
A.~{Fotouhi}, M.~{Ding}, and M.~{Hassan}, ``Dynamic base station repositioning
  to improve performance of drone small cells,'' in \emph{2016 IEEE Globecom
  Workshops (GC Wkshps)}, Dec 2016, pp. 1--6.

\bibitem{MSCD}
S.~{Chou}, Y.~{Yu}, and A.~{Pang}, ``Mobile small cell deployment for service
  time maximization over next-generation cellular networks,'' \emph{IEEE
  Transactions on Vehicular Technology}, vol.~66, no.~6, pp. 5398--5408, June
  2017.

\bibitem{CoopRelaying}
K.~{Li}, W.~{Ni}, X.~{Wang}, R.~P. {Liu}, S.~S. {Kanhere}, and S.~{Jha},
  ``Energy-efficient cooperative relaying for unmanned aerial vehicles,''
  \emph{IEEE Transactions on Mobile Computing}, vol.~15, no.~6, pp. 1377--1386,
  June 2016.

\bibitem{EnEffUAVComm}
Y.~{Zeng} and R.~{Zhang}, ``Energy-efficient uav communication with trajectory
  optimization,'' \emph{IEEE Transactions on Wireless Communications}, vol.~16,
  no.~6, pp. 3747--3760, June 2017.

\bibitem{EnMin}
Y.~{Zeng}, J.~{Xu}, and R.~{Zhang}, ``{Energy Minimization for Wireless
  Communication With Rotary-Wing UAV},'' \emph{IEEE Transactions on Wireless
  Communications}, vol.~18, no.~4, pp. 2329--2345, April 2019.

\bibitem{EnduranceEstimation}
M.-H. Hwang, H.-R. Cha, and S.~Jung, ``Practical endurance estimation for
  minimizing energy consumption of multirotor unmanned aerial vehicles,''
  \emph{Energies}, vol.~11, no.~9, 2018.

\bibitem{LOSDominance}
Y.~{Zeng}, R.~{Zhang}, and T.~J. {Lim}, ``{Wireless communications with
  unmanned aerial vehicles: opportunities and challenges},'' \emph{IEEE
  Communications Magazine}, vol.~54, no.~5, pp. 36--42, May 2016.

\bibitem{LittlesTheorem}
J.~D.~C.~Little and S.~Graves, \emph{Little's Law}, 07 2008, pp. 81--100.

\end{thebibliography}

\end{document}